\newcommand{\tr}{{\rm Tr}}
\def\fqu#1 {\fbox {\footnote {\ }}\ \footnotetext { From Qu: {\color{red}#1}}}
\def\fxh#1 {\fbox {\footnote {\ }}\ \footnotetext { From Hai: {\color{blue}#1}}}
\begin{document}
%\begin{CJK*}{GBK}{song}
\mainmatter  % start of an individual contribution

% first the title is needed
\title{A Note on Cross Correlation Distribution of Ternary $m$-Sequences}

% a short form should be given in case it is too long for the running head
\titlerunning{A Note on Cross Correlation Distribution of Ternary $m$-Sequences}

% the name(s) of the author(s) follow(s) next
%
% NB: Chinese authors should write their first names(s) in front of
% their surnames. This ensures that the names appear correctly in
% the running heads and the author index.
%%%%%%%%%%%%%%%%%%%%%×÷ÕßÐÅÏ¢%%%%%%%%%%%
%%%%%%%%%%%%%%%%%%%%%%%%%%%%%%%%%

\author{Hai Xiong\and Longjiang Qu}
\authorrunning{Xiong Hai, Longjiang Qu}

\institute{College of Science,
National University of Defense Technology, Changsha  410073, China
\mailsa\\
}

\toctitle{Lecture Notes in Computer Science}
\tocauthor{Authors' Instructions}
\maketitle

\begin{abstract}
In this note, we prove a conjecture proposed by Tao Zhang, Shuxing Li, Tao Feng and Gennian Ge,
IEEE Transaction on Information Theory, vol. 60, no. 5, May 2014.
This conjecture is about the cross correlation distribution of ternary $m$-sequences.

\textit{Index Terms---} cross correlation; decimation;  ternary $m$-sequences
\end{abstract}

\section{Introduction}
Let $\omega$ be a $3$-rd complex root of unity.
Let $\{a_t\}$ and $\{b_t\}$ be two ternary sequences of period $N$.
The cross correlation function of $\{a_t\}$ and $\{b_t\}$ is defined by
$$C_{a,b}(\tau)=\sum_{t=0}^{N-1}\omega^{a_{t+\tau}-b_t},\ \tau\in\mathbb{Z}/(N).$$

Let $n$ be a positive integer. Let $\chi$ be an additive character of $GF(3^n)$ which
is defined by $\chi(x)=\omega^{\tr^n_1(x)}$, where $\tr^n_1(x)=x+x^3+\cdots+x^{3^{n-1}}$
%\fxh{I decide to use $\tr^n_1(x)$ for a unified expression.}
is the trace function from $GF(3^n)$ to $GF(3)$. Generally, for a positive integer $r|n$,
 we denote  by $\tr^n_r(x)$  the trace function from $GF(3^n)$ to $GF(3^r)$, which is defined
 by $\tr^n_r(x)=x+x^{3^r}+\cdots+x^{3^{r(\frac{n}{r}-1)}}$.

Let $\{a_t\}$ be a ternary $m$-sequence of period $3^n-1$ and {let} $\{b_t\}$ be its   $d$-decimation
 where $\gcd(d, 3^n-1)=1$.
 Let $\alpha$ be a primitive element of $GF(3^n)$.
 Then we denote $C_{a, b}(\tau)$  by $C_d(z)$,
 where $z=\alpha^\tau$. Clearly, we have $C_d(z)+1=\sum_{x\in GF(3^n)}\chi(zx-x^d)$. And we define $S_d(z):=\sum_{x\in GF(3^n)}\chi(zx-x^d)$.
Hence computing the cross correlation distribution of $m$-sequences  is equivalent
to compute the values distribution of Weil sum $S_d(z)$.

The cross correlation distribution of $m$-sequences is an important topic in
sequences, coding theory and communications.
It essentially arises
in many contexts with various names, please refer to the
appendix of \cite{Katz:2012} for more details.
Many results on this topic have been reported.
Please see \cite{Zhang:2014} for an exhaustive survey.
In \cite{Zhang:2014}, Zhang et al. also determined the distribution of cross correlation of some
ternary $m$-sequences and got some interesting  results  about the cross correlation of some binary
$m$-sequences. One of their main results can be presented as follows.
\begin{theorem}\cite[Theorem II.5]{Zhang:2014}\label{th1}
Let $r$ be a positive integer such that $\gcd(r,3)=1$.
Let $n=3r$, $d=3^r+2$ or $d=3^{2r}+2$. Let $s$ be a ternary
$m$-sequence of period $3^n-1$. Then the cross correlation values between $s$
and its $d$-decimation are showed as in Tables \ref{tab1} and \ref{tab2}.
\end{theorem}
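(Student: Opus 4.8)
Write $q=3^r$, so that $GF(3^n)=GF(q^3)$; proving Theorem~\ref{th1} amounts to determining the value distribution of $S_d(z)=\sum_{x\in GF(q^3)}\chi(zx-x^d)$ as $z$ runs over $GF(q^3)$, for $d=q+2$ and for $d=q^2+2$. We carry out the argument for $d=q+2$; the case $d=q^2+2$ will follow by the substitution $x\mapsto x^{q}$, which turns $S_{q^2+2}(z)$ into $\sum_{v}\chi(zv^{q}-v^{2q+1})$, and since $\tr^{q^3}_q(v^{2q+1})=\tr^{q^3}_q(x^{q+2})$ after relabelling the conjugates, exactly the same cubic and the same steps apply.

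The first step is to descend to the subfield $GF(q)$. Using $\tr^{n}_1=\tr^{q}_1\circ\tr^{q^3}_q$,
$$S_d(z)=\sum_{x\in GF(q^3)}\omega^{\tr^{q}_1(\ell_z(x)-\Phi(x))},\qquad \ell_z(x):=\tr^{q^3}_q(zx),\quad \Phi(x):=\tr^{q^3}_q(x^{q+2}),$$
where $\ell_z$ is $GF(q)$-linear and, by non-degeneracy of the trace form, $z\mapsto\ell_z$ is a bijection onto $\mathrm{Hom}_{GF(q)}(GF(q^3),GF(q))$. The crucial point is that $\Phi$ is linear along the subfield direction: expanding $(x+\mu)^{q+2}=(x^{q}+\mu)(x+\mu)^2$ and using $\tr^{q^3}_q(1)=3=0$ gives, for $\mu\in GF(q)$,
$$\Phi(x+\mu)=\Phi(x)+\mu\,\bigl(\tr^{q^3}_q x\bigr)^2,$$
while $\ell_z(x+\mu)=\ell_z(x)+\mu\,\tr^{q^3}_q(z)$. (In the coordinates $X_i=x^{q^i}$ one has $\Phi=X_0^2X_1+X_1^2X_2+X_2^2X_0$, an irreducible cuspidal cubic whose unique singular point lies on the line $X_0=X_1=X_2$, i.e.\ along $GF(q)\subset GF(q^3)$, and the identity above is precisely the linearity of $\Phi$ along that line.) Summing $S_d(z)$ first over $\mu\in GF(q)$, with $x$ running over coset representatives of $GF(q^3)/GF(q)$, collapses the inner sum to $q$ if $\tr^{q^3}_q(z)=\bigl(\tr^{q^3}_q x\bigr)^2$ and to $0$ otherwise. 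Hence $S_d(z)=0$ whenever $\tr^{q^3}_q(z)$ is a non-square in $GF(q)^*$; and when $\tr^{q^3}_q(z)$ is a square, $S_d(z)$ becomes $q$ times a sum over the single coset (if $\tr^{q^3}_q z=0$) or the two cosets (if it is a nonzero square) of $\ker\tr^{q^3}_q$ on which $\tr^{q^3}_q x$ takes the prescribed value, and on each of these $\ell_z-\Phi$ is constant modulo the subfield direction.

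On such a coset we parametrise $x=x_0+\mu+tf$ with $\mu,t\in GF(q)$ and $f\in\ker\tr^{q^3}_q\setminus GF(q)$: the $\mu$-dependence cancels and $\ell_z(x)-\Phi(x)$ becomes a univariate cubic $g_z(t)\in GF(q)[t]$ with leading coefficient $-\Phi(f)$, which is nonzero (as one checks directly, or deduces from $S_d(0)=\sum_x\chi(-x^d)=0$). Since $t\mapsto t^3$ is the Frobenius of $GF(q)$, $\tr^q_1(at^3)=\tr^q_1(a^{1/3}t)$, so $\tr^q_1(g_z(t))=\tr^q_1(bt^2+ct+e)$ for suitable $b,c,e\in GF(q)$; hence $\sum_{t\in GF(q)}\omega^{\tr^q_1(g_z(t))}$ is a classical quadratic Gauss sum over $GF(q)=GF(3^r)$ — of absolute value $3^{r/2}$, real if $r$ is even and purely imaginary if $r$ is odd — when $b\neq 0$, and equals $q$ or $0$ when $b=0$. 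For $\tr^{q^3}_q z=0$ one may take $x_0=0$; then $g_z(t)$ is essentially linear, and $S_d(z)=q^2$ for exactly $q$ of the $q^2$ relevant values of $z$ (those with $\ell_z(f)=\Phi(f)^{1/3}$) and $S_d(z)=0$ for the other $q^2-q$. For $\tr^{q^3}_q z$ a nonzero square, $S_d(z)$ is $q$ times a sum of two quadratic Gauss sums whose leading data depend only on $\tr^{q^3}_q z$ while their phases are cube roots of unity depending on $z$; this yields the remaining few values, of absolute value at most $2\cdot 3^{3r/2}$, the exact list depending on the parity of $r$.

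What remains — and where the real work lies — is the bookkeeping: (i) computing, on each coset, the coefficient $b$ of $g_z$ and the quadratic residuosity of its leading coefficient, so as to fix the signs and the cube-root-of-unity phases, hence the precise values and how the two Gauss sums combine (the parity of $r$ already forces a genuine dichotomy here); and (ii) counting, for each value, how many $z\in GF(q^3)$ realise it — which reduces to counting squares and non-squares in $GF(q)$ and solutions of a bounded number of linear and quadratic conditions on $z$, ultimately one more Gauss sum in the variable $z$. It is in step (ii) that the hypothesis $\gcd(r,3)=1$ is used: it forces $\gcd(3^r-1,3^3-1)=2$, excluding $GF(27)$ as a subfield of $GF(q)$, which keeps the relevant greatest common divisors — and hence the multiplicities — in the shape recorded in Tables~\ref{tab1} and~\ref{tab2}. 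One verifies the outcome against the moments $\sum_z S_d(z)=3^n$ and $\sum_z|S_d(z)|^2=3^{2n}$, and the identical scheme, with $\tr^{q^3}_q(zv^{q})$ in place of $\tr^{q^3}_q(zx)$, settles $d=q^2+2$. The exponential-sum reduction is clean; the main obstacle is steps (i)–(ii), the extraction of the exact values and their frequencies, for which $\gcd(r,3)=1$ is indispensable.
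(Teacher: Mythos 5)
Your reduction is sound, and it is essentially the paper's reduction in coordinate-free clothing: the identity $\Phi(x+\mu)=\Phi(x)+\mu\bigl(\tr^{q^3}_q x\bigr)^2$ and the collapse of the $\mu$-sum correspond exactly to the paper's collapse of the $x_0$-sum onto the condition $x_2^2=-z_2$ (the paper realizes the subfield direction explicitly by choosing an Artin--Schreier generator $\alpha$ with $\alpha^3=\alpha+(u-1)^3$, hence $\alpha^{3^r}=\alpha+1$, and computing $\tr^n_r(x^d)$ in the basis $1,\alpha,\alpha^2$), and the passage from the residual cubic to a quadratic Gauss sum via $\tr^q_1(at^3)=\tr^q_1(a^{1/3}t)$ is the same device. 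Your determination of the value $3^{2r}$ with multiplicity $3^r$ is correct and complete.

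The gap is that the theorem's actual content --- the six (resp.\ four) values and their frequencies in Tables~\ref{tab1} and~\ref{tab2} --- is exactly the part you label ``bookkeeping'' and do not carry out. When $\tr^{q^3}_q z$ is a nonzero square you must evaluate how the two Gauss sums (one per square root) combine; that is where the $r$ even/odd dichotomy and the values $\pm 3^{3r/2},\ \pm2\cdot3^{3r/2}$ versus $\pm3^{(3r+1)/2}$ come from, and none of it is done. The two moments $\sum_zS_d(z)=3^n$ and $\sum_z|S_d(z)|^2=3^{2n}$ you propose to ``verify the outcome against'' cannot determine the frequencies of six distinct values even granting the known multiplicities of $0$ and $3^{2r}$; the paper needs the additional counting lemmas II.3 and II.4 of \cite{Zhang:2014} for this. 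More seriously, your one concrete claim about this step --- that $\gcd(r,3)=1$ is ``indispensable'' because it excludes $GF(27)\subseteq GF(3^r)$ and thereby fixes the multiplicities --- is false: the entire point of the present paper is that the identical tables hold when $3\mid r$, and in both the paper's argument and your own reduction the hypothesis never intervenes (in \cite{Zhang:2014} it was used only to construct the field extension, an obstacle your setup, like the paper's, avoids). Finally, $x\mapsto x^q$ does not turn $d=q^2+2$ into ``exactly the same cubic'': $\tr^{q^3}_q(v^{2q+1})$ is the conjugate form $X_0X_1^2+X_1X_2^2+X_2X_0^2$, not $\Phi=X_0^2X_1+X_1^2X_2+X_2^2X_0$, which is why the paper handles that case separately with a different (though parallel) trace formula.
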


\begin{table}[]
\tabcolsep 0pt
\caption{The distribution of the case $r$ even}\label{tab1}
%\vspace*{-12pt}
\begin{center}
\def\temptablewidth{0.5\textwidth}
{\rule{\temptablewidth}{1pt}}
\begin{tabular*}{\temptablewidth}{@{\extracolsep{\fill}}cc}
Cross Correlation Value  & Occurs Times \\   \hline
       $-1$& $\frac{3^{3r}+3^{2r}}{2}-3^r$\\
       $3^{2r}-1$& $3^r$\\
       $3^{\frac{3r}{2}}-1$& $\frac{3^{3r-1}+3^{2r-1}}{2}$\\
       $-3^{\frac{3r}{2}}-1$& $\frac{3^{3r-1}+3^{2r-1}}{2}$\\
       $2\cdot3^{\frac{3r}{2}}-1$& $\frac{3^{3r-1}+3^{2r-1}}{4}$\\
       $-2\cdot3^{\frac{3r}{2}}-1$& $\frac{3^{3r-1}+3^{2r-1}}{4}$
       \end{tabular*}
       {\rule{\temptablewidth}{1pt}}
       \end{center}
       \end{table}

 \begin{table}[]
\tabcolsep 0pt
\caption{The distribution of the case $r$ odd}\label{tab2}
%\vspace*{-12pt}
\begin{center}
\def\temptablewidth{0.5\textwidth}
{\rule{\temptablewidth}{1pt}}
\begin{tabular*}{\temptablewidth}{@{\extracolsep{\fill}}cc}
Cross Correlation Value  & Occurs Times \\   \hline
       $-1$& $2\cdot3^{3r-1}+3^{2r-1}-3^r$\\
       $3^{2r}-1$& $3^r$\\
       $3^{\frac{3r+1}{2}}-1$& $\frac{3^{3r-1}+3^{2r-1}}{2}$\\
       $-3^{\frac{3r+1}{2}}-1$& $\frac{3^{3r-1}+3^{2r-1}}{2}$
       \end{tabular*}
       {\rule{\temptablewidth}{1pt}}
       \end{center}
       \end{table}

Zhang et al.  conjectured that Theorem \ref{th1}  is also right if $\gcd(r, 3)=3$.
In this paper we prove their conjecture for any positive integer $r$.
 And our technique is generalized from theirs.

In the rest of this paper, we always assume that $r$ is a positive {integer, $d=3^r+2\text{ or }3^{2r}+2$,}
{and} $n=3r$. Let $E=GF(3^n)$ and $F=GF(3^r)$. {It is easy to verify that $\gcd(d,3^n-1)=1$.}

\section{A Proof}

Before proving the conjecture, let us review the sketch of the proof of Theorem 1 in \cite{Zhang:2014}.
%Assume that  $p=3$, $n=3r$, $E=GF(3^n)$ and $F=GF(3^r)$.
%
%and $d=3^r+2$. Let
At first, a suitable element was chosen to construct field extension from $F$ to $E$.
Hence {e}very element in $E$ can be expressed by
some elements in the subfield $F$ {with the aforementioned element.  And  then  $S_d(z)$ can be expressed by some exponential sums over $F$.} Finally,
$S_d(z)$ was computed by using {some characterizations of} quadratic Weil sum {\cite[Lemma II.2]{Zhang:2014} and}
quadratic Gauss sum  {\cite[Lemma II.1]{Zhang:2014}.}
%\fxh{delete`` and a simple characterization (\cite[Lemma II.4]{Zhang:2014}) of one equation system over finite fields." Because Lemma II.4 just was used to compute the number of occurrence times.}

In the following, we will prove the conjecture. The main difference between our proof and the one in \cite{Zhang:2014} is that
we choose different elements to construct field extension.
Then we can remove the restriction $\gcd(r, 3)=1$ and the discussion {of cases}  $r\equiv 2, 1\pmod 3$ in the proof of \cite{Zhang:2014}.

The rest of the paper is split into two cases according to the value of $d$.

\textbf{Case $d=3^r+2$:}
%In the rest of this paper, we always assume that $r$ is a positive number, $p=3$,
%$n=3r$ and . Let $E=GF(3^n)$ and $F=GF(3^r)$.
%The case for $d=3^{2r}+2$ is similar to prove.

Let $u$ be an element of $F$ such
that $\tr^r_1(u-1)=1$. Hence  $x^3-x-(u-1)^3$ is an irreducible polynomial over $F$.
And let $\alpha$ be a root of $x^3-x-(u-1)^3=0$.
 Then we can get $E=F(\alpha)$,
which means that for any $x\in E$, it can be uniquely expressed as $x=x_0+x_1\alpha+x_2\alpha^2$,
where $x_0, x_1, x_2\in F$.

\begin{lemma}\label{le5}
Let $x=x_0+x_1\alpha+x_2\alpha^2$ and $z=z_0+z_1\alpha+z_2\alpha^2$ be two elements of $E$.
Then we have
$$\tr^n_r(x^d)={((u-1)^3+1)x_2^3+x_2^2x_1+x_2^2x_0+2x_2x_1^2+2x_1^3}$$
and
$$\tr^n_r(zx)={2(z_2+z_0)x_2+2z_1x_1+2z_2x_0}.$$
 \end{lemma}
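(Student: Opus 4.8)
The plan is to carry out all computations in the $F$-basis $\{1,\alpha,\alpha^2\}$ of $E$, using that $n=3r$ so $\tr^n_r(y)=y+y^{3^r}+y^{3^{2r}}$, and that $\tr^n_r$ is $F$-linear because the Galois group of $E/F$ fixes $F$ pointwise.

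The key preliminary is the action of the Frobenius $\sigma\colon y\mapsto y^{3^r}$ on $\alpha$. Writing $c=(u-1)^3\in F$ and iterating $\alpha^3=\alpha+c$ gives $\alpha^{3^k}=\alpha+\sum_{i=0}^{k-1}c^{3^i}$; taking $k=r$, using $c^{3^r}=c$ and the hypothesis $\tr^r_1(u-1)=\tr^r_1(c)=1$, we get $\alpha^{3^r}=\alpha+1$ and hence $\alpha^{3^{2r}}=\alpha-1$. A short computation (reducing $\alpha^3=\alpha+c$, $\alpha^4=\alpha^2+c\alpha$) then gives $\tr^n_r(1)=\tr^n_r(\alpha)=\tr^n_r(\alpha^3)=0$ and $\tr^n_r(\alpha^2)=\tr^n_r(\alpha^4)=2$. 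The formula for $\tr^n_r(zx)$ follows at once: expanding $zx=\sum_{m=0}^{4}\gamma_m\alpha^m$ with $\gamma_m\in F$, the trace values give $\tr^n_r(zx)=2\gamma_2+2\gamma_4$, and since $\gamma_2=z_0x_2+z_1x_1+z_2x_0$ and $\gamma_4=z_2x_2$ this is exactly $2(z_2+z_0)x_2+2z_1x_1+2z_2x_0$.

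For $\tr^n_r(x^d)$ with $d=3^r+2$, I would write $x^d=x^{3^r}\cdot x^2$. Using $\alpha^{3^r}=\alpha+1$ expresses $x^{3^r}$ in the basis, and squaring $x$ and reducing modulo $\alpha^3-\alpha-c$ expresses $x^2$ in the basis. Their product is a degree-$4$ polynomial in $\alpha$; since $\tr^n_r$ sends $1,\alpha,\alpha^3$ to $0$ and $\alpha^2,\alpha^4$ to $2$, only the coefficient of $\alpha^2$ in the reduced product is needed, and $\tr^n_r(x^d)$ equals twice that coefficient. Performing the reduction in characteristic $3$ and collecting terms, with $c=(u-1)^3$, yields $((u-1)^3+1)x_2^3+x_2^2x_1+x_2^2x_0+2x_2x_1^2+2x_1^3$, as claimed.

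The main obstacle is this last step: multiplying $x^{3^r}$ by $x^2$ and then reducing twice modulo the defining polynomial is a lengthy though entirely mechanical calculation in which coefficients must be tracked carefully modulo $3$. The observation that the trace depends only on the $\alpha^2$-coefficient of the product is what keeps the bookkeeping manageable.
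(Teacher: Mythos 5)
Your proposal is correct and follows essentially the same route as the paper: both hinge on deriving $\alpha^{3^r}=\alpha+1$ (hence $\alpha^{3^{2r}}=\alpha+2$) from $\tr^r_1\bigl((u-1)^3\bigr)=\tr^r_1(u-1)=1$ and then expanding everything in the $F$-basis $\{1,\alpha,\alpha^2\}$. Your precomputation of $\tr^n_r(\alpha^m)$ for $m\le 4$ and the reduction to a single coefficient of $x^{3^r}\cdot x^2$ is only a (pleasant) streamlining of the paper's direct expansion of $x^{3^r}x^2+x^{3^{2r}}(x^{3^r})^2+x(x^{3^{2r}})^2$, and it does reproduce the stated formulas.
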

\begin{proof}
Noting that $\alpha^3=\alpha+(u-1)^3$, we can deduce that $\alpha^{3^r}=\alpha+\tr^r_1((u-1)^3)=\alpha+1$.
Thus $x^{3^r}=x_0+x_1(\alpha+1)+x_2(\alpha+1)^2$ and
$x^{3^{2r}}=x_0+x_1(\alpha+2)+x_2(\alpha+2)^2$.
Hence we have
\begin{equation}
\begin{split}
\tr^n_r(x^d)
=&\quad x^d+(x^d)^{3^{r}}+(x^d)^{3^{2r}}\\
=&\quad x^{3^r}x^2+x^{3^{2r}}(x^{3^r})^2+x(x^{3^{2r}})^2\\
=&\quad (x_0+x_1(\alpha+1)+x_2(\alpha+1)^2)(x_0+x_1\alpha+x_2\alpha^2)^2\\
&+(x_0+x_1(\alpha+2)+x_2(\alpha+2)^2)(x_0+x_1(\alpha+1)+x_2(\alpha+1)^2)^2\\
&+(x_0+x_1\alpha+x_2\alpha^2)(x_0+x_1(\alpha+2)+x_2(\alpha+2)^2)^2\\
=&\quad {((u-1)^3+1)x_2^3+x_2^2x_1+x_2^2x_0+2x_2x_1^2+2x_1^3}.
\end{split}
\nonumber
\end{equation}
The last step is got from a complicated but not difficult computation.
Comparing with the above equation, it is easier to verify that
\begin{equation}
\begin{split}
\tr^n_r(zx)
=&\quad zx+(zx)^{3^r}+(zx)^{3^{2r}}\\
=&\quad (z_0+z_1\alpha+z_2\alpha^2)(x_0+x_1\alpha+x_2\alpha^2)\\
&+(z_0+z_1(\alpha+1)+z_2(\alpha+1)^2)(x_0+x_1(\alpha+1)+x_2(\alpha+1)^2)\\
&+(z_0+z_1(\alpha+2)+z_2(\alpha+2)^2)(x_0+x_1(\alpha+2)+x_2(\alpha+2)^2)\\
=&\quad{2(z_2+z_0)x_2+2z_1x_1+2z_2x_0}.
\nonumber
\end{split}
\end{equation}
$\hfill\Box$

According to Lemma \ref{le5}, we can get that
$$\tr^n_1(x^d)=\tr^r_1(\tr^n_r(x^d))=\tr^r_1({x_2^2x_1+x_2^2x_0+2x_2x_1^2+u x_2+2x_1})$$
and
$$\tr^n_1(zx)=\tr^r_1(\tr^n_r(zx))=\tr^r_1({2(z_2+z_0)x_2+2z_1x_1+2z_2x_0}).$$

Define $\chi_F(x)=\omega^{\tr^r_1(x)}$, for $x\in F$. Then we can get

\begin{equation}\label{eq2}
\begin{split}
&S_d(z)\\
=&\sum_{x\in E}\omega^{\tr^n_1(zx-x^d)}\\
=&\sum_{x_0, x_1, x_2\in F}\chi_F({x_2x_1^2+(2z_1-x_2^2+1)x_1+(2z_2+2z_0-u)x_2+(2z_2-x_2^2)x_0})\\
=&\sum_{x_1, x_2\in F}\chi_F({x_2x_1^2+(2z_1-x_2^2+1)x_1+(2z_2+2z_0-u)x_2})\sum_{x_0\in F}\chi_F((2z_2-x_2^2)x_0)\\
=&3^r\cdot \sum_{x_1\in F, x_2\in M}\chi_F({ x_2x_1^2+(2z_1-x_2^2+1)x_1+(2z_2+2z_0-u)x_2}),
\end{split}
\end{equation}
where $$M=\{x_2\in F| x_2^2=-z_2\}.$$

Following from similar arguments in \cite{Zhang:2014}, we can deduce the following results {by Eq. (\ref{eq2}),
\cite[Lemma II.1]{Zhang:2014} and \cite[Lemma II.2]{Zhang:2014}}. The details are omitted here.
\begin{itemize}
\item If $r$ is even, then $S_d(z)$  takes six values, namely $0$, $3^{2r}$, $3^{\frac{3r}{2}}$,
$-3^{\frac{3r}{2}}$, $2\cdot3^{\frac{3r}{2}}$ and $-2\cdot3^{\frac{3r}{2}}$.
And the number of  occurrences of  the first two values are $\frac{3^{3r}+3^{2r}}{2}-3^r$ and $3^r$ respectively.
\item If $r$ is odd, then $S_d(z)$  takes four values, namely $0$, $3^{2r}$, $3^{\frac{3r+1}{2}}$ and
$-3^{\frac{3r+1}{2}}$.
And the number of  occurrence of  the second value is  $3^r$.
\end{itemize}
Then according to \cite[Lemma II.3]{Zhang:2014} and \cite[Lemma II.4]{Zhang:2014}, we can solve the number of occurrences of  all the values.
The result is the same as Zhang et al.  conjectured in\cite{Zhang:2014}. Here a remark is as follows.
 In the original form of \cite[Lemma II.4]{Zhang:2014}, the authors assumed that $\gcd(r,3)=1$.
%But according to their arguments,
However, this result can be easily generalized to any positive integer $r$.

\textbf{Case $d=3^{2r}+2$:}

Let  $u$ be an element of {$F$ such that} $\tr^r_1(1-u)=1$ and let $\alpha$ be a root
of $x^3-x=(1-u)^3$. Then we also can get
$$\tr^n_1(x^d)=\tr^r_1(\tr^n_r(x^d))=\tr^r_1({2x_2^2x_1+x_2^2x_0+2x_2x_1^2+u x_2+x_1})$$
and
$$\tr^n_1(zx)=\tr^r_1(\tr^n_r(zx))=\tr^r_1({2(z_2+z_0)x_2+2z_1x_1+2z_2x_0}).$$

Then similarly as the first case, we can confirm the conjecture in this case.
\end{proof}

\section{Conclusion}
In this note, we completely determine the distribution of cross correlation values of a ternary
$m$-sequence with period $3^{3r}-1$ and its  $d$-decimation, where $d=3^{r}+2$ or $d=3^{2r}+2$.
Hence we confirm the conjecture presented in \cite{Zhang:2014}.

\section*{Acknowledgments}
The work of H. Xiong was supported by
 Hunan Provincial  Innovation Foundation for Postgraduate  (No. CX2013B007)
 and the Innovation Foundation of NUDT under Grant (No. B130201).
 The work of L. Qu was supported in part by
the Research Project of National University of Defense Technology under Grant CJ 13-02-01 and
the Program for New Century Excellent Talents in University (NCET).
 The first author also gratefully acknowledge financial support from China Scholarship Council.

\bibliographystyle{model1a-num-names}
\bibliography{<your-bib-database>}

%\section*{A Comment}
%We recived an email from Yongbo Xia. He told us that they also got
%the same result as in our manuscript. And their paper was accepted by
%2014 SETA on June 14th. And the information of their paper is "Y. Xia, T.
%Helleseth and G. Wu, A note on cross-correlation distribution between a
%ternary m-sequence and its decimated sequence, to appear in SETA2014"

%\end{CJK*}
\end{document}